\newtheorem{assumption}{Assumption}
\newtheorem{lemma}{Lemma}
\newtheorem{remark}{Remark}
\newtheorem{theorem}{Theorem}
\newenvironment{proof}{{\indent \indent \it Proof:}}
\begin{document}
\title{	Globally Composite-Learning-Based Intelligent Fast Finite-Time Control for Uncertain Strict-Feedback Systems with Nonlinearly Periodic Disturbances}

\author{
	\vskip 1em
	
	Xidong Wang, Zhan Li, Zhen He

	\thanks{
					
		Xidong Wang, Zhan Li are with the Research Institute of Intelligent Control and Systems, Harbin Institute of Technology, Harbin 150001, China (e-mail: 17b904039@stu.hit.edu.cn; zhanli@hit.edu.cn). (Corresponding author: Zhan Li)

        Zhen He is with the Department of Control Science and Engineering, Harbin Institute of Technology, Harbin 150001, China (e-mail: hezhen@hit.edu.cn).

        Zhan Li made contributions to formula derivation and original draft review, and Zhen He made contributions to simulation validation.

	}
}

\maketitle
	
\begin{abstract}
This brief aims at the issue of globally composite-learning-based neural fast finite-time (F-FnT) tracking control for a class of uncertain systems in strict-feedback form subject to nonlinearly periodic disturbances. First, uncertain dynamics with periodic parameters are identified by incorporating Fourier series expansion (FSE) into an intelligent estimator, which leverages the feedback of newly designed prediction errors in updating weights to boost learning performance. Then, a novel switching mechanism is constructed to fulfill smooth switching from the composite FSE-based neural controller to robust control law when the inputs of the intelligent estimator transcend the valid approximation domain. By fusing the switching mechanism with an improved F-FnT backstepping algorithm, the globally F-FnT boundedness of all variables in the closed-loop system is guaranteed. Finally, a simulation study is conducted to evince the availability of the theoretical result.
\end{abstract}

\begin{IEEEkeywords}
FSE-based intelligent estimator, composite learning law, global stabilization, finite-time backstepping, nonlinearly periodic disturbances.
\end{IEEEkeywords}

{}

\definecolor{limegreen}{rgb}{0.2, 0.8, 0.2}
\definecolor{forestgreen}{rgb}{0.13, 0.55, 0.13}
\definecolor{greenhtml}{rgb}{0.0, 0.5, 0.0}

\section{Introduction}

\IEEEPARstart{O}{ver} the past few decades, adaptive backstepping algorithms have gained a great deal of academic attention owing to their resultful abilities to deal with tracking control issues for strict-feedback nonlinear systems \cite{ACFB}. To obtain a rapid transient response and strong robustness in practical tracking control, the finite-time (FnT) backstepping technique was developed \cite{FnT}. On account of the universal approximation capability, fuzzy logic systems (FLSs) and neural networks (NNs) were exploited into adaptive backstepping algorithms to handle intricate uncertain dynamics \cite{fuzzy2,neural1}. To enhance learning performance, the prediction errors obtained from the serial-parallel estimation model were entered into the neural update laws \cite{composite1,composite4}. It is worth noting that the prerequisite for excellent estimation performance is that intelligent approximators should retain valid at all times. Once a transient transcends the active region of the intelligent approximator, tracking performance slumps, and even the system becomes unstable. Therefore, to achieve global stabilization of intelligent control, switching schemes along with robust controllers were established to indicate the intelligent approximation domain, and bring the external transients back into the working region \cite{global2,global4,global5}. 

In practical applications, periodic time-varying (PTV) perturbations are natural in many control systems \cite{PTV}. Unfortunately, when the PTV parameters are displayed nonlinearly in uncertainties, the aforementioned approaches cannot be employed directly to compensate for such unknown items. The authors of \cite{FSE} designed a FSE-based neural approximator to identify uncertain dynamics with PTV parameters in nonlinear form, where the FSE was adopted to model unknown PTV parameters, and then the approximated values were utilized as the inputs signals of radial basis function neural network (RBFNN) to compensate for the unknown terms. Afterwards, numerous modified FSE-based intelligent control protocols were developed to apply to various kinds of control systems \cite{FSE4,FSE5,FSE7}. In \cite{FSE6}, a FSE-RBFNN-based backstepping controller was constructed to fulfill the FnT tracking objective for strict-feedback nonlinear systems in the presence of periodic perturbations. Nonetheless, there exist two issues that are seldom taken into account: i) Most FSE-based intelligent controllers focus on the system stabilization, while neglecting the accuracy of identified unknown nonlinearity; ii) Most FSE-based intelligent estimators assume to operate within the valid region, however, the inputs of intelligent estimators, especially the approximate values of PTV parameters, cannot be known in advance, which poses difficulties in parameter setting for intelligent estimators.

Based on the foregoing discussion, this brief designs a novel globally composite FSE-based neural F-FnT backstepping control protocol to cope with the tracking control issue for uncertain strict-feedback systems with nonlinearly PTV disturbances. The core contributions of this brief are generalized below: 1) By adopting the feedback of newly designed prediction errors in updating weights, the performance of FSE-based intelligent estimator in identifying uncertain dynamics with nonlinearly periodic parameters is enhanced; 2) A novel switching mechanism along with robust control law is constructed to circumvent system instability when the inputs of FSE-based intelligent estimator transcend the approximation valid domain; 3) By incorporating the switching mechanism into a modified FnT backstepping algorithm, all the closed-loop variables are guaranteed to be globally F-FnT bounded.

Notation: $\left\|  \cdot  \right\|$ expresses the Euclidean norm, $\left\|  \cdot  \right\|_F$ refers to the Frobenius norm, and ${\lambda _{\max }}\left(  \cdot  \right)$ represents the maximum eigenvalues of a matrix. ${\mathop{\rm sig}\nolimits} {\left( x \right)^m } = {\left| x \right|^m }{\mathop{\rm sgn}} (x)$ with ${\mathop{\rm sgn}} ( \cdot )$ being standard signum function.
\section{Problem formulation and lemmas}
\subsection{Problem Formulation}
This brief is concerned with the following uncertain strict-feedback systems subject to nonlinearly periodic disturbances
\begin{equation}
\left\{  \begin{aligned}
&{{\dot \eta }_i} = {F_i}\left( {{{\bar \eta }_i},{p_i}\left( t \right)} \right) + {G_i}\left( {{{\bar \eta }_i}} \right){\eta _{i + 1}},i = 1,2, \ldots ,n - 1\\
&{{\dot \eta }_n} = {F_n}\left( {{{\bar \eta }_n},{p_n}\left( t \right)} \right) + {G_n}\left( {{{\bar \eta }_n}} \right)u\\
&y = {\eta _1}
 \end{aligned} \right.
\end{equation}
where ${\bar \eta _i} = {\left[ {{\eta _1},{\eta _2}, \ldots ,{\eta _i}} \right]^T} \in {\mathbb{R}}^i$, $\left( {i = 1,2, \ldots ,n} \right)$, represent the system state vectors. $u \in {\mathbb{R}}$ and $y \in {\mathbb{R}}$ stand for the control input and output signals, respectively. ${p_i}\left( t \right):\left[ {0, + \infty } \right) \to {\mathbb{R}}^{q_i}$ depict unknown continuously PTV perturbations with known periods $T_i$, and ${F_i}\left( {{{\bar \eta }_i},{p_i}\left( t \right)} \right): {\mathbb{R}}^{i + {q_i}} \to {\mathbb{R}}$ denote smooth uncertainties with nonlinearly periodic perturbations. ${G_i}\left(  \cdot  \right)$ are the nonzero known control coefficients satisfying $0 < \underline G < \left| {{G_i}\left(  \cdot  \right)} \right| < \bar G$ (Without loss of generality, assume ${G_i}\left(  \cdot  \right)>0$).
\begin{remark}
In practical control systems, there are two types of periodic perturbations: state-dependent periodic perturbations and time-dependent periodic perturbations. When periodic perturbations are related to the states, the uncertainties become state-dependent unknown dynamics, which can be directly identified by RBFNNs. When periodic perturbations are time-dependent and exist in the uncertain terms with linearly parameterized form, then ${F_i}\left( {{{\bar \eta }_i},{p_i}\left( t \right)} \right) = p_i^T\left( t \right){F_{ti}}\left( {{{\bar \eta }_i}} \right)$, which can be tackled by the approach in \cite{linear1}. Therefore, we consider a more general and complex form with PTV perturbations nonlinearly displayed in uncertain dynamics. Due to the unmeasurable PTV perturbations being the input signals of intelligent approximator, the intelligent approximators in [3]-[9] cannot be directly applied to identify ${F_i}\left( {{{\bar \eta }_i},{p_i}\left( t \right)} \right)$. 
\end{remark}

The control target of this brief is to construct a globally FSE-based neural control scheme with composite learning law such that all the closed-loop signals are globally F-FnT bounded, and the tracking error approaches a sufficiently small region near zero in finite time. 
\begin{assumption}
There exist known smooth positive functions ${\bar F_i}\left( {{{\bar \eta }_i},t} \right)$ such that $\left| {{F_i}\left( {{{\bar \eta }_i},{p_i}\left( t \right)} \right)} \right| \le {\mu _i}{\bar F_i}\left( {{{\bar \eta }_i},t} \right)$, where $\mu _i$ are unknown nonnegative constants.
\end{assumption}
\begin{assumption}
The given desired trajectory ${y_d}(t)$ and its first derivative ${\dot y_d}(t)$ are smooth and bounded.
\end{assumption}
\subsection{Lemmas}
\begin{lemma} [\cite{FnT}]
Considering the system $\dot \eta  = f(\eta )$, if there exists Lyapunov function $V\left( \eta  \right)$ satisfying
$\dot V(\eta ) \le  - {\vartheta _1}V(\eta ) - {\vartheta _2}{V^m}(\eta ){\rm{ + }}{\vartheta _3}$ with ${\vartheta _1} > 0,{\vartheta _2} > 0,0 < {\vartheta _3} < \infty ,0 < m < 1$, then the origin of the system $\dot \eta  = f(\eta )$ is practical F-FnT stable, and the settling time is bounded: 
\begin{equation}
\begin{aligned}
{T_s} \le \max \left\{ {\frac{1}{{\nu {\vartheta _1}\left( {1 - m} \right)}}\ln \frac{{\nu {\vartheta _1}{V^{1 - m}}\left( {{\eta _0}} \right) + {\vartheta _2}}}{{{\vartheta _2}}}},\right.\\
\phantom{=\;\;}\left.{\frac{1}{{{\vartheta _1}\left( {1 - m} \right)}}\ln \frac{{{\vartheta _1}{V^{1 - m}}\left( {{\eta _0}} \right) + \nu {\vartheta _2}}}{{\nu {\vartheta _2}}}} \right\},
\end{aligned}
\end{equation}
where $\nu  \in (0,1)$. Moreover, the residual set can be estimated as
$
\left\{ {\eta :V\left( \eta  \right) \le \min \left\{ {\frac{{{\vartheta _3}}}{{\left( {1 - \nu } \right){\vartheta _1}}},{{\left( {\frac{{{\vartheta _3}}}{{\left( {1 - \nu } \right){\vartheta _2}}}} \right)}^{\frac{1}{m}}}} \right\}} \right\}.
$
\end{lemma}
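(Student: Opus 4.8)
The plan is to read $\dot V \le -\vartheta_1 V - \vartheta_2 V^m + \vartheta_3$ as a scalar comparison inequality and to reduce it, outside a residual set, to one of two homogeneous finite-time inequalities whose settling times are precisely the two arguments of the $\max$. The essential device is to split the constant $\vartheta_3$ against one of the two decaying terms using the weight $\nu \in (0,1)$.

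First I would introduce the two thresholds $A := \vartheta_3/[(1-\nu)\vartheta_1]$ and $B := [\vartheta_3/((1-\nu)\vartheta_2)]^{1/m}$, so that the claimed residual set is exactly $\{\eta : V \le \min\{A,B\}\}$. The key observation is that if $V > A$ then $(1-\nu)\vartheta_1 V > \vartheta_3$, whence
\[
\dot V \le -\nu\vartheta_1 V - \vartheta_2 V^m - [(1-\nu)\vartheta_1 V - \vartheta_3] \le -\nu\vartheta_1 V - \vartheta_2 V^m,
\]
while if $V > B$ then $(1-\nu)\vartheta_2 V^m > \vartheta_3$, giving
\[
\dot V \le -\vartheta_1 V - \nu\vartheta_2 V^m - [(1-\nu)\vartheta_2 V^m - \vartheta_3] \le -\vartheta_1 V - \nu\vartheta_2 V^m .
\]
Thus whenever the trajectory lies outside the residual set, i.e. $V > \min\{A,B\}$, at least the inequality associated with the smaller threshold is active, so one of these two clean estimates always holds.

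Next, for a generic inequality $\dot V \le -aV - bV^m$ with $a,b>0$ I would substitute $W := V^{1-m}$, which linearises the dynamics: since $\dot W = (1-m)V^{-m}\dot V$, the estimate becomes $\dot W \le -(1-m)a\,W - (1-m)b$. Solving this linear comparison inequality yields $W(t) \le (W(0)+b/a)e^{-(1-m)at} - b/a$, and equating the right-hand side to zero gives the finite time $\tfrac{1}{(1-m)a}\ln\tfrac{aV^{1-m}(\eta_0)+b}{b}$. Substituting $(a,b)=(\nu\vartheta_1,\vartheta_2)$ and $(a,b)=(\vartheta_1,\nu\vartheta_2)$ reproduces the first and second arguments of the $\max$, respectively; since reaching the residual-set boundary takes no longer than reaching the origin under the same bound, $T_s$ equal to the $\max$ of the two is a valid upper bound in either ordering of $A$ and $B$.

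The main obstacle is not any single computation but the bookkeeping of the case split: I must argue that the relevant homogeneous inequality remains active along the whole trajectory until the residual set is reached, and that taking the $\max$ of the two settling-time expressions safely covers both orderings of $A$ and $B$ without knowing their relative size in advance. Care is also needed to invoke the comparison lemma correctly, since the bound is an inequality rather than an equality, and to verify that once $V$ enters $\{V \le \min\{A,B\}\}$ it cannot subsequently escape, which secures the practical F-FnT stability together with the claimed residual set.
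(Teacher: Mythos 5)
The paper itself contains no proof of this lemma: it is imported verbatim from the cited source (\cite{FnT}), so there is no internal argument to compare yours against. Judged on its own merits, your proof is correct and is precisely the standard argument for this class of results. The $\nu$-splitting of $\vartheta_3$ against either the linear term or the fractional-power term produces the two thresholds $A=\vartheta_3/[(1-\nu)\vartheta_1]$ and $B=[\vartheta_3/((1-\nu)\vartheta_2)]^{1/m}$; the substitution $W=V^{1-m}$ turns each resulting homogeneous inequality $\dot V\le -aV-bV^m$ into the linear bound $\dot W\le -(1-m)aW-(1-m)b$, and the comparison principle then yields exactly $\frac{1}{(1-m)a}\ln\frac{aV^{1-m}(\eta_0)+b}{b}$ for the two parameter pairs $(\nu\vartheta_1,\vartheta_2)$ and $(\vartheta_1,\nu\vartheta_2)$, which are the two arguments of the stated $\max$. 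Your handling of the case split is also sound: whichever of $A,B$ is smaller determines which clean inequality stays active down to the residual set, and the $\max$ covers the unknown ordering. Two points you flagged deserve one line each in a final write-up: (i) the first-crossing argument (define $t^\ast=\inf\{t:V(t)\le\min\{A,B\}\}$ and apply the comparison bound on $[0,t^\ast)$, where the clean inequality is guaranteed by continuity), and (ii) forward invariance, which follows since at $V=\min\{A,B\}$ the active decomposition still gives $\dot V<0$. Neither is a gap, just bookkeeping to be made explicit.
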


\begin{lemma} [\cite{lemma2,fix_own}]
For any $\sigma \in {\mathbb{R}}$ and constants $\kappa >0, {\tau _\sigma } >0, {\varepsilon _\sigma } >0$, the following inequalities hold
\begin{equation}
\begin{aligned}
&0 \le \left| \sigma  \right| - \sigma \tanh \left( {\frac{\sigma }{\kappa }} \right) \le 0.2785\kappa,\\
&0 \le \left| \sigma  \right| - {\sigma ^2}\sqrt {\frac{{{\sigma ^2} + \tau _\sigma ^2 + \varepsilon _\sigma ^2}}{{\left( {{\sigma ^2} + \tau _\sigma ^2} \right)\left( {{\sigma ^2} + \varepsilon _\sigma ^2} \right)}}}  < \frac{{{\tau _\sigma }{\varepsilon _\sigma }}}{{\sqrt {\tau _\sigma ^2 + \varepsilon _\sigma ^2} }}.
\end{aligned}
\end{equation}
\end{lemma}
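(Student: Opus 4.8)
The plan is to treat the two displayed inequalities separately, since each is a self-contained statement in a single scalar variable and can be settled by elementary real analysis. For each I would establish the lower bound first (essentially by a sign/monotonicity argument) and then attack the upper bound by a suitable reduction.

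For the first inequality, I would write $\sigma \tanh(\sigma/\kappa) = |\sigma|\tanh(|\sigma|/\kappa)$, which is valid because $\tanh$ is odd and hence shares the sign of its argument. Since $0 \le \tanh(|\sigma|/\kappa) < 1$, the product lies in $[0,|\sigma|)$, giving the lower bound $|\sigma| - \sigma\tanh(\sigma/\kappa) \ge 0$ at once. For the upper bound I would pass to a one-dimensional optimization: setting $x = |\sigma|/\kappa \ge 0$ and using $1-\tanh x = 2/(e^{2x}+1)$ yields $|\sigma| - \sigma\tanh(\sigma/\kappa) = \kappa\,\frac{2x}{e^{2x}+1}$. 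Differentiating $h(x) = 2x/(e^{2x}+1)$ shows its unique stationary point solves $e^{2x}(2x-1)=1$ (so $x^* \approx 0.639$), and evaluating there gives $\max_{x\ge 0} h(x) \approx 0.2785$, which is exactly the constant in the bound $0.2785\kappa$.

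For the second inequality, I would set $s = \sigma^2 \ge 0$, $a = \tau_\sigma^2$, $b = \varepsilon_\sigma^2$, and introduce $u = s(s+a+b)$ so that $(s+a)(s+b) = u + ab$. The quantity under the root becomes $r := u/(u+ab) \in [0,1)$, and the whole expression collapses to $|\sigma| - \sigma^2\sqrt{\cdots} = \sqrt{s}\,(1-\sqrt{r})$. The lower bound is then immediate, since $r \le 1$ forces $1-\sqrt r \ge 0$. For the upper bound the key is the elementary estimate $\sqrt r \ge r$ on $[0,1]$, which gives $1-\sqrt r \le 1-r = ab/(u+ab)$ and hence $\sqrt{s}\,(1-\sqrt r) \le \sqrt{s}\,ab/[(s+a)(s+b)]$. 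Expanding $(s+a)(s+b) = s^2 + (a+b)s + ab$ and applying AM--GM to the last two terms, $(a+b)s + ab \ge 2\sqrt{s\,ab(a+b)}$, bounds the denominator from below by $2\sqrt{s}\sqrt{ab(a+b)}$ (discarding $s^2 \ge 0$); the $\sqrt{s}$ factors then cancel, leaving $\tfrac12\sqrt{ab}/\sqrt{a+b} = \tfrac12\,\tau_\sigma\varepsilon_\sigma/\sqrt{\tau_\sigma^2+\varepsilon_\sigma^2}$, which is strictly below the claimed right-hand side.

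The algebraic rearrangements are routine; the only genuinely delicate points are (i) locating and evaluating the maximizer of $h$ in the $\tanh$ bound, which is transcendental and must be pinned down numerically to recover the constant $0.2785$, and (ii) selecting the right pair of elementary inequalities ($\sqrt r \ge r$ followed by AM--GM) in the second bound, since a direct differentiation of $\sqrt{s}\,(1-\sqrt r)$ in $s$ produces an unwieldy critical-point equation. I expect step (ii) to be the main obstacle, in the sense of finding a clean argument rather than a laborious one, even though the spare factor $\tfrac12$ above shows that the stated bound is comfortably loose and therefore robust to the precise choice of estimates.
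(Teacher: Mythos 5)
Your proof is correct, but there is nothing in the paper to compare it against: the lemma is imported verbatim from the cited references (\cite{lemma2,fix_own}) and the paper supplies no proof of its own. Judged on its merits, your argument is sound. For the $\tanh$ bound, reducing to $h(x)=2x/(e^{2x}+1)$ and locating the stationary point via $e^{2x}(2x-1)=1$ is the classical derivation; one polish worth noting is that at the critical point $e^{2x^{*}}=1/(2x^{*}-1)$, so $h(x^{*})=2x^{*}-1$, i.e.\ the constant $k=0.2785$ is characterized by the identity $k=e^{-(k+1)}$, which replaces the numerical evaluation of $h$ at $x^{*}$ by a one-line computation (only the root itself needs numerics). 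For the second bound, the substitution $r=u/(u+ab)$, the estimate $\sqrt{r}\ge r$ on $[0,1]$, and AM--GM on $(a+b)s+ab$ are all valid; the only cosmetic gap is the cancellation of $\sqrt{s}$, which presupposes $s>0$ --- at $s=0$ the left-hand side vanishes and the strict inequality holds trivially since $\tau_{\sigma}\varepsilon_{\sigma}>0$, so the case split should be stated. In fact your chain yields the stronger constant $\tfrac{1}{2}\,\tau_{\sigma}\varepsilon_{\sigma}/\sqrt{\tau_{\sigma}^{2}+\varepsilon_{\sigma}^{2}}$, showing the bound as stated in the lemma is not tight; your proof is therefore not just a reconstruction but a slight sharpening of the result the paper cites.
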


\begin{lemma} [\cite{lemma3}]
For any ${z_c} \in {\mathbb{R}},c = 1,2, \ldots ,M$, $0<\beta \le 1$, one has
${\left( {\sum\limits_{c = 1}^M {\left| {{z_c}} \right|} } \right)^\beta } \le \sum\limits_{c = 1}^M {{{\left| {{z_c}} \right|}^\beta } \le {M^{1 - \beta }}} {\left( {\sum\limits_{c = 1}^M {\left| {{z_c}} \right|} } \right)^\beta }$. 
\end{lemma}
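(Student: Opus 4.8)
The plan is to write $a_c := |z_c| \ge 0$ throughout, since every term depends on the $z_c$ only through their absolute values, and to dispose immediately of the degenerate case $\sum_{c=1}^M a_c = 0$ (all $z_c$ zero), where every quantity vanishes and the chain holds trivially. With $S := \sum_{c=1}^M a_c > 0$ in hand, I would attack the two inequalities separately, both ultimately resting on the single elementary fact that $0 < \beta \le 1$ makes $t \mapsto t^\beta$ behave sublinearly on $[0,1]$ and concavely on $[0,\infty)$.

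For the left inequality $S^\beta \le \sum_{c=1}^M a_c^\beta$, I would normalize by dividing through by $S^\beta$ and set $p_c := a_c/S \in [0,1]$, so that $\sum_{c=1}^M p_c = 1$. The claim reduces to $\sum_{c=1}^M p_c^\beta \ge 1$. Here I would invoke the pointwise bound $p_c^\beta \ge p_c$, valid for every $p_c \in [0,1]$ precisely because $\beta \le 1$; summing over $c$ gives $\sum_{c=1}^M p_c^\beta \ge \sum_{c=1}^M p_c = 1$, which is exactly what is needed.

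For the right inequality $\sum_{c=1}^M a_c^\beta \le M^{1-\beta} S^\beta$, I would appeal to the concavity of $t \mapsto t^\beta$ on $[0,\infty)$ via Jensen's inequality with uniform weights $1/M$: this yields $\tfrac{1}{M}\sum_{c=1}^M a_c^\beta \le \bigl(\tfrac{1}{M}\sum_{c=1}^M a_c\bigr)^\beta = M^{-\beta} S^\beta$, and multiplying by $M$ produces the stated bound. An equally short alternative is H\"older's inequality with conjugate exponents $1/\beta$ and $1/(1-\beta)$ applied to the sequences $(a_c^\beta)_c$ and $(1)_c$, which gives $\sum_{c=1}^M a_c^\beta \le \bigl(\sum_{c=1}^M a_c\bigr)^\beta M^{1-\beta}$ in one line.

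I do not expect a genuine obstacle here; the substance lies entirely in the hypothesis $0 < \beta \le 1$, which is used twice in opposite directions (sublinearity for the left bound, concavity for the right). The only points demanding a little care are the boundary case $\beta = 1$, where both inequalities collapse to equality and $M^{1-\beta} = 1$, and the legitimacy of the normalization step, namely that $S > 0$, which is precisely why I isolate the all-zero case at the very start.
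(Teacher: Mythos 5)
Your proof is correct and complete. Note that the paper itself does not prove this lemma at all --- it is imported by citation (reference \cite{lemma3}) as a standard auxiliary result --- so there is no in-paper argument to compare against; your write-up simply supplies the proof the paper omits. Both halves of your argument are sound: the normalization $p_c = a_c/S$ together with the pointwise bound $p_c^\beta \ge p_c$ on $[0,1]$ handles the left inequality, and Jensen's inequality for the concave map $t \mapsto t^\beta$ (or equivalently H\"older with exponents $1/\beta$ and $1/(1-\beta)$) handles the right one; the degenerate all-zero case and the boundary case $\beta = 1$ are correctly isolated. This is the standard textbook route, and there is nothing to add or repair.
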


\begin{lemma} [\cite{lemma4}]
 For $\tilde \chi ,\chi  \in {\mathbb{R}}$, ${m_c} = {m_{c2}}/{m_c}_1 < 1$ with ${m_{c2}} > 0,{m_{c1}} > 0$ being odd integers, it holds that $\tilde \chi {\left( {\chi  - \tilde \chi } \right)^{{m_c}}} \le  - {\beta _1}{\tilde \chi ^{1 + {m_c}}} + {\beta _2}{\chi ^{1 + {m_c}}}$, where ${\beta _1} = \frac{1}{{1 + {m_c}}}\left[ {{2^{{m_c} - 1}} - {2^{\left( {{m_c} - 1} \right)\left( {{m_c} + 1} \right)}}} \right],{\beta _2} = \frac{1}{{1 + {m_c}}}\left[ {\frac{{2{m_c} + 1}}{{{m_c} + 1}} + \frac{{{2^{ - {{\left( {{m_c} - 1} \right)}^2}\left( {{m_c} + 1} \right)}}}}{{{m_c} + 1}} - {2^{{m_c} - 1}}} \right]$.
\end{lemma}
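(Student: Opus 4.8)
The plan is to prove the inequality by first reducing it to a scalar problem and then dominating the cross term with a weighted Young inequality combined with the subadditivity estimate of Lemma 3. Because $m_c=m_{c2}/m_{c1}$ is a quotient of odd positive integers, the map $s\mapsto s^{m_c}=\operatorname{sgn}(s)|s|^{m_c}$ is an odd, continuous, increasing function on all of $\mathbb{R}$, so $(\chi-\tilde\chi)^{m_c}$ is well defined for every real argument; moreover $1+m_c=(m_{c1}+m_{c2})/m_{c1}$ has an even numerator, whence $\tilde\chi^{1+m_c}=|\tilde\chi|^{1+m_c}\ge 0$ and likewise for $\chi$. First I would record two structural facts: the claimed inequality is invariant under $(\tilde\chi,\chi)\mapsto(-\tilde\chi,-\chi)$ (the left-hand side is unchanged because $(-1)^{m_c}=-1$), and it is positively homogeneous of degree $1+m_c$. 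These reductions let me assume $\chi\ge 0$ and, after the scaling $t=\tilde\chi/\chi$ for $\chi>0$, reduce the statement to the one-variable inequality $h(t):=\beta_2-t(1-t)^{m_c}-\beta_1|t|^{1+m_c}\ge 0$ for all $t\in\mathbb{R}$; the degenerate case $\chi=0$ is then immediate and only requires $\beta_1\le 1$.

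The heart of the argument is to expose the genuinely negative self-interaction term, since neither a direct Young bound nor a convexity bound on $|\chi-\tilde\chi|^{1+m_c}$ alone can produce a negative coefficient on $|\tilde\chi|^{1+m_c}$. Using the algebraic identity $\tilde\chi(\chi-\tilde\chi)^{m_c}=-|\tilde\chi|^{1+m_c}+\tilde\chi[(\chi-\tilde\chi)^{m_c}-(-\tilde\chi)^{m_c}]$, the leading $-|\tilde\chi|^{1+m_c}$ is exactly the source of $-\beta_1$. I would then bound the bracketed increment by the subadditivity estimate of Lemma 3 (with exponent $m_c$ and $M=2$) and dominate the remaining cross term with the weighted Young inequality of conjugate exponents $1+m_c$ and $(1+m_c)/m_c$. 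Taking the free Young weight to be a power of $2$, together with the convexity bound $|\chi-\tilde\chi|^{1+m_c}\le 2^{m_c}(|\tilde\chi|^{1+m_c}+|\chi|^{1+m_c})$, is what injects the factors $2^{m_c-1}$, $2^{(m_c-1)(m_c+1)}$ and $2^{-(m_c-1)^2(m_c+1)}$ into the constants; after collecting like powers the $|\tilde\chi|^{1+m_c}$ terms combine into the net negative coefficient $-\beta_1$ and the $|\chi|^{1+m_c}$ terms into $\beta_2$.

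The main obstacle is keeping the coefficient of $|\tilde\chi|^{1+m_c}$ negative, which is delicate because the clean extraction above is only fully effective when $\tilde\chi$ and $\chi-\tilde\chi$ carry compatible signs: the H\"older-type bound $|(\chi-\tilde\chi)^{m_c}-(-\tilde\chi)^{m_c}|\le|\chi|^{m_c}$ holds when the two arguments share a sign but fails (indeed reverses, for $0<m_c<1$) when they are opposite. I therefore expect the proof to branch on the signs of $\tilde\chi$ and $\chi-\tilde\chi$, bounding the cross term directly by Young's inequality in the opposite-sign region, where the useful content degenerates to the constraint $\beta_1\le\beta_2$ rather than a free negative term. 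Verifying that the single global choice of weight keeps $h(t)\ge 0$ across the seams between these sign regions, and that it reproduces precisely the exponents appearing in $\beta_1$ and $\beta_2$, is the tedious bookkeeping that the explicit constants are designed to encode; the remaining algebra is routine.
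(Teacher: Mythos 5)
First, a point of context: the paper offers no proof of this lemma at all --- it is imported verbatim from \cite{lemma4} --- so your attempt can only be judged on its own terms, and judged that way it has a genuine gap. The entire content of the statement is that the inequality holds with the \emph{specific} constants $\beta_1,\beta_2$, and your argument never derives or verifies them; worse, the mechanism you describe for producing them cannot work. Your second paragraph proposes: extract $-|\tilde\chi|^{1+m_c}$, bound the increment by $2^{1-m_c}|\chi|^{m_c}$ globally (Lemma~3 with $M=2$), then absorb the cross term $2^{1-m_c}|\tilde\chi|\,|\chi|^{m_c}$ with one weighted Young inequality. Normalizing $|\chi|=1$ and writing $s=|\tilde\chi|$, any continuation of that chain requires the intermediate inequality
\begin{equation*}
2^{1-m_c}s\le(1-\beta_1)s^{1+m_c}+\beta_2\qquad\text{for all }s\ge0,
\end{equation*}
and no choice of Young weight (power of two or otherwise) can rescue it when it is false. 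It is false for small $m_c$: the quantity $\sup_{s\ge0}\bigl[2^{1-m_c}s-(1-\beta_1)s^{1+m_c}\bigr]$ grows on the order of $m_c2^{1/m_c}$ as $m_c\to0^+$, while $\beta_2<1$. Concretely, for $m_c=1/5$ (an admissible ratio of positive odd integers) one has $\beta_1\approx0.050$, $\beta_2\approx0.901$, and at $s=8$ the left side is $2^{0.8}\cdot8\approx13.93$ while the right side is $0.950\cdot8^{1.2}+0.901\approx12.42$.

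Your third paragraph correctly senses that a sign split is needed, but the fallback you describe there is also insufficient. In the region where $\tilde\chi$ and $\chi-\tilde\chi$ share a sign (exactly where the factor $2^{1-m_c}$ is unavoidable) one has $|\chi|=|\tilde\chi|+|\chi-\tilde\chi|$, so writing $|\tilde\chi|=t|\chi|$ the statement to be proved in that region is
\begin{equation*}
t(1-t)^{m_c}+\beta_1t^{1+m_c}\le\beta_2\qquad\text{for all }t\in[0,1],
\end{equation*}
whose two sides both tend to $1$ as $m_c\to0^+$; this is a genuinely tight inequality tying the stated formulas for $\beta_1,\beta_2$ to $m_c$, not the weak requirement ``$\beta_1\le\beta_2$'' that you invoke. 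Moreover, a naive Young bound in this region only yields $t(1-t)^{m_c}\le1$, i.e.\ the requirement $1+\beta_1\le\beta_2$, which is false since $\beta_2<1$; one must exploit the complementarity $t+(1-t)=1$, e.g.\ via superadditivity of $x\mapsto x^{1+m_c}$, which reduces this region to the still-nontrivial endpoint-tight inequality $(1+m_c)(\beta_2-\beta_1)\ge1$. These verifications (together with the compatible-sign region, where weight one does work and gives coefficients $\mp m_c/(1+m_c)$) are the actual heart of the lemma: they are where the explicit formulas for $\beta_1$ and $\beta_2$ enter, they are tight as $m_c\to0$ and $m_c\to1$, and they are precisely what your proposal defers as ``routine algebra'' and never performs. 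As submitted, the proposal therefore does not establish the statement.
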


To approximate the unknown PTV parameters as the input signals of RBFNN, the following FSE is introduced \cite{FSEB}:
\begin{equation}
p\left( t \right) = {l^T}\rho \left( t \right) + {\tau _p}\left( t \right),\left\| {{\tau _p}\left( t \right)} \right\| \le {\bar \tau _p},
\end{equation}
where $l = {\left[ {{l _1}, \ldots ,{l _q}} \right]} \in {\mathbb{R}}^{m \times q}$, $\rho \left( t \right) = {\left[ {{\rho_1 \left( t \right)}, \ldots ,{\rho_m \left( t \right)}} \right]}^T$ with $\rho_1 \left( t \right)=1,{\rho _{2r}}\left( t \right) = \sqrt 2 \sin \left( {{{2\pi rt} \mathord{\left/
 {\vphantom {{2\pi rt} T}} \right.
 \kern-\nulldelimiterspace} T}} \right),{\rho _{2r + 1}}\left( t \right) = \sqrt 2 \cos \left( {{{2\pi rt} \mathord{\left/
 {\vphantom {{2\pi rt} T}} \right.
 \kern-\nulldelimiterspace} T}} \right),r = 1, \ldots ,{{\left( {m - 1} \right)} \mathord{\left/
 {\vphantom {{\left( {m - 1} \right)} 2}} \right.
 \kern-\nulldelimiterspace} 2}$.

Then the uncertain dynamic $F\left( {\bar \eta ,p\left( t \right)} \right)$ can be identified by RBFNN on a compact domain:
\begin{equation}
F\left( {\bar \eta ,p\left( t \right)} \right) = {\Omega ^T}H\left( {\bar \eta ,{l^T}\rho \left( t \right)} \right) + {\tau _F}\left( {\bar \eta ,t} \right),
\end{equation}
where $\Omega$ stands for the optimal weight vector, $H\left( {\bar \eta ,{l^T}\rho \left( t \right)} \right) = {\left[ {{h_1}\left( {\bar \eta ,{l^T}\rho \left( t \right)} \right), \ldots ,{h_k}\left( {\bar \eta ,{l^T}\rho \left( t \right)} \right)} \right]^T}$ with ${h_j}\left(  \cdot  \right)$ being Gaussian basis functions $\left( {j = 1,2, \ldots ,k} \right)$, and $\left| {{\tau _F}\left( {\bar \eta ,t} \right)} \right| \le {\bar \tau _F}$.

\begin{lemma} [\cite{FSE}]
Define ${\tilde \Omega } = {\Omega } - {\hat \Omega },{\tilde l} = {l} - {\hat l}$, then
\begin{equation}
\begin{aligned}
&{\Omega ^T}H\left( {\bar \eta ,{l^T}\rho \left( t \right)} \right) - {{\hat \Omega }^T}H\left( {\bar \eta ,{{\hat l}^T}\rho \left( t \right)} \right)\\
& = {{\tilde \Omega }^T}\left( {\hat H - \hat H'{{\hat l}^T}\rho \left( t \right)} \right) + {{\hat \Omega }^T}\hat H'{{\tilde l}^T}\rho \left( t \right) + {d_F},
\end{aligned}
\end{equation}
where $\left| {{d_F}} \right| \le {\left\| l \right\|_F}{\left\| {\rho \left( t \right){{\hat \Omega }^T}\hat H'} \right\|_F} + \left\| \Omega  \right\|\left\| {\hat H'{{\hat l}^T}\rho \left( t \right)} \right\| + {\left| \Omega  \right|_1}$, $\hat H = H\left( {\bar \eta ,{{\hat l}^T}\rho \left( t \right)} \right)$, and $\hat H' = {\left[ {{{\hat h'}_1}, \ldots ,{{\hat h'}_k}} \right]^T}$ with ${\hat h'_j} = {{\left( {\partial {h_j}\left( {\bar \eta ,p} \right)} \right)} \mathord{\left/
 {\vphantom {{\left( {\partial {h_j}\left( {\bar \eta ,p} \right)} \right)} {\partial p}}} \right.
 \kern-\nulldelimiterspace} {\partial p}}{|_{p = {{\hat l}^T}\rho \left( t \right)}}$, $\left( {j = 1,2, \ldots ,k} \right)$.
\end{lemma}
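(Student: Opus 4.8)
The plan is to treat the displayed equation as the \emph{definition} of $d_F$: once $d_F$ is taken to be the difference between the two sides it is an exact identity, so the only real content is (i) reducing $d_F$ to a usable closed form and (ii) establishing the stated bound. First I would substitute $\tilde\Omega = \Omega - \hat\Omega$ and $\tilde l = l - \hat l$ into the right-hand side and cancel the computable pieces. Writing $H = H(\bar\eta, l^T\rho(t))$ and recalling $\hat H = H(\bar\eta, \hat l^T\rho(t))$, the two $\hat H$-terms combine through $\hat\Omega^T\hat H + \tilde\Omega^T\hat H = \Omega^T\hat H$, while expanding $\tilde\Omega$ and $\tilde l$ in the remaining products cancels the common $\hat\Omega^T\hat H'\hat l^T\rho(t)$ piece and recombines $\hat l$ back into $l$.

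Carrying out this elementary algebra collapses the remainder to
\begin{equation}
d_F = \Omega^T(H - \hat H) + \Omega^T\hat H'\hat l^T\rho(t) - \hat\Omega^T\hat H'l^T\rho(t),
\end{equation}
which one may read as a first-order Taylor picture about $\hat p = \hat l^T\rho(t)$: the terms retained in the lemma are the computable first-order sensitivities in $\tilde\Omega$ and $\tilde l$, whereas $d_F$ absorbs the function increment $H - \hat H$ together with the two $\Omega$-versus-$\hat\Omega$ mismatch terms. Crucially, no smallness of $H - \hat H$ is required; it suffices that $d_F$ be bounded.

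The bound then follows term by term through the triangle inequality. For the first term, each Gaussian basis function obeys $0 < h_j \le 1$, so $|h_j(\bar\eta, l^T\rho(t)) - \hat h_j| \le 1$ and hence $|\Omega^T(H - \hat H)| \le \sum_j |\Omega_j| = {\left| \Omega \right|_1}$. For the second term, the Cauchy--Schwarz inequality on ${\mathbb{R}}^k$ gives $|\Omega^T(\hat H'\hat l^T\rho(t))| \le \left\| \Omega \right\|\,\|\hat H'\hat l^T\rho(t)\|$. For the third term I would rewrite the scalar as a trace, $\hat\Omega^T\hat H'l^T\rho(t) = {\rm tr}(l^T\rho(t)\hat\Omega^T\hat H')$ via the cyclic property, and apply the Frobenius Cauchy--Schwarz inequality $|{\rm tr}(l^T M)| \le \left\| l \right\|_F\left\| M \right\|_F$ with $M = \rho(t)\hat\Omega^T\hat H'$, yielding $\left\| l \right\|_F\,\|\rho(t)\hat\Omega^T\hat H'\|_F$. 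Summing the three estimates reproduces the asserted bound on $|d_F|$.

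I expect the main obstacle to be the bookkeeping in the third term: one must keep the matrix shapes straight ($l \in {\mathbb{R}}^{m\times q}$, $\rho(t) \in {\mathbb{R}}^m$, $\hat\Omega \in {\mathbb{R}}^k$, and the Jacobian $\hat H' \in {\mathbb{R}}^{k\times q}$) so that the trace rearrangement is legitimate and the remaining factor lands as $\rho(t)\hat\Omega^T\hat H'$ rather than some transpose of it. The conceptual point worth emphasizing is that $d_F$ is never claimed to be small---the Gaussian bound alone caps $H - \hat H$---so the lemma supplies precisely the \emph{bounded} residual needed when this decomposition is later inserted into the Lyapunov and adaptive-law analysis.
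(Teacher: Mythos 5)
Your proof is correct, and it is essentially the same argument as the one behind the cited lemma: the paper itself states this result without proof (importing it from the reference), and your route---defining $d_F$ by exact cancellation, which correctly collapses to $d_F = \Omega^T(H-\hat H) + \Omega^T\hat H'\hat l^T\rho(t) - \hat\Omega^T\hat H'\,l^T\rho(t)$, then bounding the three terms via the Gaussian property $0 < h_j \le 1$, Cauchy--Schwarz, and the trace/Frobenius inequality---is algebraically equivalent to the standard first-order Taylor expansion of $H$ about $\hat l^T\rho(t)$ used in the source. Your dimension bookkeeping in the trace step is also right ($l \in \mathbb{R}^{m\times q}$ and $\rho(t)\hat\Omega^T\hat H' \in \mathbb{R}^{m\times q}$, so $\mathrm{tr}(l^T\rho(t)\hat\Omega^T\hat H')$ is the Frobenius inner product), so nothing is missing.
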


To achieve smooth switching, the following $nth$-order continuous differentiable function is introduced \cite{global2}: 
\begin{equation}
{\varpi _j}\left( {{\eta _j}} \right) = 
\left\{
\begin{aligned}
&1,&\left| {{\rho _j}} \right| \le {c_{j1}}\\
&{\cos ^n}\left( {\frac{\pi }{2}{{\sin }^n}\left( {\frac{\pi }{2}\frac{{\eta _j^2 - c_{j1}^2}}{{c_{j2}^2 - c_{j1}^2}}} \right)} \right),&otherwise\\
&0,&\left| {{\rho _j}} \right| \ge {c_{j2}}
\end{aligned} 
\right.
\end{equation}
where $c_{j1}>0$ and $c_{j2}>0$ stand for the left and right boundaries of RBFNNs' valid domain, $j = 1,2, \ldots ,i$.

Then, a novel switch indicator function for identifying whether the inputs of the RBFNN, including state variables and the approximate values of PTV parameters, are located in the active region is designed as follows 
\begin{equation}
{w_i}\left( {{{\bar \eta }_i},{{\hat p}_i}\left( t \right)} \right) = \prod\limits_{j = 1}^i {{\varpi _j}\left( {{\eta _j}} \right) \cdot } \prod\limits_{j = 1}^{{q_i}} {{\varpi _{{p_{ij}}}}\left( {{{\hat p}_{ij}}\left( t \right)} \right)},
\end{equation}
where ${\hat p_i}\left( t \right) = \hat l_i^T{\rho _i}\left( t \right) = {\left[ {{{\hat p}_{i1}}\left( t \right), \ldots ,{{\hat p}_{i{q_i}}}\left( t \right)} \right]^T}$.
\begin{remark}
The switch indicator function is developed to enable smooth switching between the composite FSE-based RBFNN controller and the robust control law, which guarantees that only the composite adaptive law is valid when the state variables and approximate values of PTV parameters are within the RBFNN's working region to save control energy.
\end{remark}
\section{Main Results}
\subsection{Globally composite FSE-based neural FnT control law}
Define tracking errors
\begin{equation}
\begin{aligned}
&{\xi _1} = {\eta _1} - {y_d},
&{\xi _i} = {\eta _i} - {\eta _{i,c}},i = 2,3, \ldots ,n
\end{aligned}
\end{equation}
where ${\eta _{i,c}}$ are attained from the filtering of virtual control inputs ${\alpha _{i - 1}}$ via the following rapid FnT command filter \cite{differentiator1}. 
\begin{equation}
\begin{aligned}
{{\dot \eta }_{i,c}} = &{\eta _{i,d}},\\
\varepsilon _{ci}^2{{\dot \eta }_{i,d}} =&  - {a_{i1}}\left( {{\eta _{i,c}} - {\alpha _{i - 1}}} \right) - {a_{i2}}{\mathop{\rm sig}\nolimits} {\left( {{\eta _{i,c}} - {\alpha _{i - 1}}} \right)^{{m_{ic}}}} \\
&- {b_{i1}}{\varepsilon _{ci}}{\eta _{i,d}} - {b_{i2}}{\mathop{\rm sig}\nolimits} {\left( {{\varepsilon _{ci}}{\eta _{i,d}}} \right)^{{m_{di}}}},
\end{aligned}
\end{equation}
where ${a_{i1}} > 0,{a_{i2}} > 0,{b_{i1}} > 0,{b_{i2}} > 0,{\varepsilon _{ci}} > 0,{m_{di}} \in \left( {0,1} \right),{m_{ic}} \in \left( {{m_{di}}/\left( {2 - {m_{di}}} \right),1} \right)$ are proper tuning parameters.

To alleviate the impact of $\left( {{\eta _{i,c}} - {\alpha _{i - 1}}} \right)$, the following compensation system is adopted \cite{fuzzy2}
\begin{equation}
\begin{aligned}
&{{\dot \delta }_1} =  - {k_1}{\delta _1} + {G_1}\left( {{\eta _{2,c}} - {\alpha _1}} \right) + {G_1}{\delta _2} - {r_1}\delta _1^{m_c},\\
&{{\dot \delta }_i} =  - {k_i}{\delta _i} + {G_i}\left( {{\eta _{\left( {i + 1} \right),c}} - {\alpha _i}} \right) - {G_{i - 1}}{\delta _{i - 1}} + {G_i}{\delta _{i + 1}} - {r_i}\delta _i^{m_c},\\
&{{\dot \delta }_n} =  - {k_n}{\delta _n} - {G_{n - 1}}{\delta _{n - 1}} - {r_n}\delta _n^{m_c},
\end{aligned}
\end{equation}
where ${r_i} > 0,{k_i} > 0,0.5 < {m_c} = {{{m_c}_2} \mathord{\left/
 {\vphantom {{{m_c}_2} {{m_c}_1}}} \right.
 \kern-\nulldelimiterspace} {{m_c}_1}} < 1$ (${{m_c}_2},{{m_c}_1}$ being odd integers) are proper tuning parameters. 

Then, virtual control inputs ${\alpha _i}$ and the controller $u = {\alpha _n}$ are designed as below
\begin{equation}\small
\begin{aligned}
&{\alpha _1} = \frac{1}{{{G_1}}}\left[ { - {k_1}{\xi _1} + {{\dot y}_d} - {n_1}{\psi _1}\left( {{\sigma _1}} \right) - \left( {1 - {w _1}} \right){{\hat \mu }_1}{{\bar F}_1}\tanh \left( {\frac{{{{\bar F}_1}{\sigma _1}}}{{{\kappa _1}}}} \right)} \right]\\
& - \frac{{{w _1}}}{{{G_1}}}\left[ {\hat \Omega _1^T{{\hat H}_1} + \frac{{{\sigma _1}}}{2}\left\| {{\rho _1}\left( t \right)\hat \Omega _1^T{{\hat H'}_1}} \right\|_F^2 + \frac{{{\sigma _1}}}{2}{{\left\| {{{\hat H'}_1}{{\hat l}_1}^T{\rho _1}\left( t \right)} \right\|}^2}} \right],\\
&{\alpha _i} = \frac{1}{{{G_i}}}\left[ { - {k_i}{\xi _i} + {{\dot \eta }_{i,c}} - {n_i}{\psi _i}\left( {{\sigma _i}} \right) - \left( {1 - {w_i}} \right){{\hat \mu }_i}{{\bar F}_i}\tanh \left( {\frac{{{{\bar F}_i}{\sigma _i}}}{{{\kappa _i}}}} \right)} \right]\\
& - \frac{{{w_i}}}{{{G_i}}}\left[ {\hat \Omega _i^T{{\hat H}_i} + \frac{{{\sigma _i}}}{2}\left\| {{\rho _i}\left( t \right)\hat \Omega _i^T{{\hat H'}_i}} \right\|_F^2 + \frac{{{\sigma _i}}}{2}{{\left\| {{{\hat H'}_i}{{\hat l}_i}^T{\rho _i}\left( t \right)} \right\|}^2}} \right] - \frac{{{G_{i - 1}}}}{{{G_i}}}{\xi _{i - 1}},
\end{aligned}
\end{equation} 
where ${n_i} > 0,{\kappa _i} >0$ are design parameters, and ${w_i} \left(  \cdot  \right)$ are switch indicator functions in (8), ${\sigma _i} = {\xi _i} - {\delta _i}$,
\begin{equation}
{\psi _i}\left( {{\sigma _i}} \right) = \sigma _i^{1 + 2{m_c}}\sqrt {\frac{{\sigma _i^{2 + 2{m_c}} + \tau _{\sigma i}^2 + \varepsilon _{\sigma i}^2}}{{\left( {\sigma _i^{2 + 2{m_c}} + \tau _{\sigma i}^2} \right)\left( {\sigma _i^{2 + 2{m_c}} + \varepsilon _{\sigma i}^2} \right)}}}, 
\end{equation}
and ${\hat \Omega}_i,{\hat l}_i,{\hat \mu}_i$ satisfy the following dynamic equations:
\begin{equation}
\begin{aligned}
&{{\dot {\hat \Omega} }_i} = {\Gamma _{{\Omega _i}}}\left[ {{w_i}\left( {{\sigma _i} + {\gamma _{si}}{s_{ni}}} \right)\left( {{{\hat H}_i} - {{\hat H'}_i}\hat l_i^T{\rho _i}\left( t \right)} \right) - {\gamma _{{ _i}}}{{\hat \Omega }_i}} \right],\\
&{{\dot {\hat l}}_i} = {\Gamma _{{l_i}}}\left[ {{w_i}\left( {{\sigma _i} + {\gamma _{si}}{s_{ni}}} \right){\rho _i}\left( t \right)\hat \Omega _i^T{{\hat H'}_i} - {\gamma _{{i}}}{{\hat l}_i}} \right],\\
&{{\dot {\hat \mu }}_i} = {\gamma _{i1}}\left( {1 - {w_i}} \right){{\bar F}_i}{\sigma _i}\tanh \left( {\frac{{{{\bar F}_i}{\sigma _i}}}{{{\kappa _i}}}} \right) - {\gamma _{i2}}{{\hat \mu }_i} - {\gamma _{i3}}\hat \mu _i^{{m_c}},
\end{aligned}
\end{equation}
where ${\Gamma _{{\Omega _i}}} > 0,{\Gamma _{{l_i}}} > 0$ are proper designed gain matrices, and ${\gamma _{si}},{\gamma _{{i}}},{\gamma _{i1}},{\gamma _{i2}},{\gamma _{i3}}$ are positive design parameters.

The following modified FnT serial-parallel approximation models are designed to obtain the prediction errors ${s_{ni}} = {\eta _i} - {\hat \eta _i}$: (define $\eta_{n+1}=u$)
\begin{equation}
\begin{aligned}
 &{{\dot {\hat \eta} }_i}= {w _i}\left[ {\hat \Omega _i^T{{\hat H}_i} + \frac{{{s_{ni}}}}{2}\left\| {{\rho _i}\left( t \right)\hat \Omega _i^T{{\hat H'}_i}} \right\|_F^2 + \frac{{{s_{ni}}}}{2}{{\left\| {{{\hat H'}_i}{{\hat l}_i}^T{\rho _i}\left( t \right)} \right\|}^2}} \right]\\
 &+ {G_i}{\eta _{i + 1}} + {\upsilon _{i1}}{s_{ni}} + {\upsilon _{i2}}s_{ni}^{m_c} + \left( {1 - {w _i}} \right){{\hat \mu }_{ni}}{{\bar F}_i}\tanh \left( {\frac{{{{\bar F}_i}{s_{ni}}}}{{{\kappa _{ni}}}}} \right),
\end{aligned}
\end{equation}
where
\begin{equation}
\begin{aligned}
{\dot {\hat \mu }_{ni}} = {\gamma _{ni1}}\left( {1 - {w_i}} \right){s_{ni}}{\bar F_i}\tanh \left( {\frac{{{{\bar F}_i}{s_{ni}}}}{{{\kappa _{ni}}}}} \right) - {\gamma _{ni2}}{\hat \mu _{ni}} - {\gamma _{ni3}}\hat \mu _{ni}^{m_c},
\end{aligned}
\end{equation}
with $\upsilon _{i1},\upsilon_{i2},{\kappa _{ni}},{\gamma _{ni1}},{\gamma _{ni2}},{\gamma _{ni3}}$ being positive tuning parameters.
\begin{remark}
When calculating the time derivative of the virtual control variables, the positive fractional power of less than one in the virtual control inputs may generate singularity, which can be circumvented by adopting (13).
\end{remark}
\begin{remark}
The prediction errors extracted from the modified FnT serial-parallel approximation models (15) are added to the FSE-based RBFNN updating laws (14) to enhance the learning performance. 
\end{remark}
\begin{remark}
If we select ${\bar F_i}\left( {{{\bar \eta }_i},t} \right) = 1$, the developed control scheme can be applied to handle the situation where the bound of the uncertain functions is unknown. 
\end{remark}
\subsection{Stability analysis}
\begin{theorem}
Considering the uncertain strict-feedback system (1) with nonlinearly periodic disturbances, the rapid FnT command filter (10), the dynamic equations for compensation (11), the  modified FnT serial-parallel approximation models (15), the composite learning laws (14), the virtual control inputs (12), all the closed-loop signals are globally F-FnT bounded, and the tracking error approaches a sufficiently small region around zero in finite time. 
\end{theorem}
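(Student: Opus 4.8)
The plan is to construct a composite Lyapunov function assembled from the transformed tracking errors $\sigma_i = \xi_i - \delta_i$, the compensation states $\delta_i$, the estimation errors $\tilde\Omega_i$, $\tilde l_i$, $\tilde\mu_i$, and the prediction-error subsystems ($s_{ni}$, $\tilde\mu_{ni}$), and then show its derivative satisfies the differential inequality $\dot V \le -\vartheta_1 V - \vartheta_2 V^m + \vartheta_3$ required by Lemma~1. I would take
\begin{equation}
V = \sum_{i=1}^{n}\left[\tfrac{1}{2}\sigma_i^2 + \tfrac{1}{2}\delta_i^2 + \tfrac{1}{2}\tilde\Omega_i^T\Gamma_{\Omega_i}^{-1}\tilde\Omega_i + \tfrac{1}{2}\operatorname{tr}(\tilde l_i^T\Gamma_{l_i}^{-1}\tilde l_i) + \tfrac{1}{2\gamma_{i1}}\tilde\mu_i^2 + \tfrac{1}{2}s_{ni}^2 + \tfrac{1}{2\gamma_{ni1}}\tilde\mu_{ni}^2\right],
\end{equation}
and differentiate termwise along the closed-loop dynamics.

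First I would handle the $\sigma_i$ channel. Using $\dot\xi_i = \dot\eta_i - \dot\eta_{i,c}$ together with the system~(1), the filter~(10) and the compensator~(11), the cross terms $G_{i-1}\delta_{i-1}$, $G_i\delta_{i+1}$ and the filtering residuals $\eta_{i,c}-\alpha_{i-1}$ are arranged to cancel telescopically across consecutive steps, exactly as in standard command-filtered backstepping. The decisive point is the switching structure: when $w_i=1$ the controller~(12) injects $-\hat\Omega_i^T\hat H_i$ and the two damping terms $\tfrac{\sigma_i}{2}\|\cdots\|_F^2$, and I would invoke Lemma~5 to rewrite $\Omega_i^TH_i-\hat\Omega_i^T\hat H_i$ in terms of $\tilde\Omega_i$, $\tilde l_i$ and the bounded residual $d_{Fi}$; the damping terms then dominate the $\hat\Omega_i^T\hat H'_i$ and $\hat H'_i\hat l_i^T\rho_i$ contributions after Young's inequality, while the composite adaptation~(14) cancels the remaining $\tilde\Omega_i$, $\tilde l_i$ terms (the $\sigma_i+\gamma_{si}s_{ni}$ coupling is what links this to the prediction-error block). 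When $w_i=0$ the robust term $(1-w_i)\hat\mu_i\bar F_i\tanh(\cdot)$ together with Assumption~1 and the first inequality of Lemma~2 dominates $|F_i|\le\mu_i\bar F_i$, and the $\tilde\mu_i$ update absorbs the estimate mismatch; the intermediate regime $0<w_i<1$ is covered by convex combination. The $\psi_i(\sigma_i)$ term, via the second inequality of Lemma~2, supplies the fractional-power $-\sigma_i^{2+2m_c}$ dissipation up to a constant.

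Next I would treat the $\delta_i$, $s_{ni}$ and the adaptation-error blocks. The compensator~(11) is designed so that $\sum_i\delta_i\dot\delta_i$ yields $-\sum_i k_i\delta_i^2 - \sum_i r_i\delta_i^{1+m_c}$ with the $G_{i-1}\delta_{i-1}$/$G_i\delta_{i+1}$ couplings again telescoping; the $\delta_i(\eta_{(i+1),c}-\alpha_i)$ terms cancel against the matching residuals from the $\sigma_i$ computation. The prediction models~(15) give $s_{ni}\dot s_{ni}\le -\upsilon_{i1}s_{ni}^2-\upsilon_{i2}s_{ni}^{1+m_c}+(\text{bounded})$ after the same Lemma~5 and Lemma~2 bookkeeping, and Remark~4 guarantees the damping structure is non-singular. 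For the leakage terms, I would use $-\gamma_{\cdot}\tilde\chi\hat\chi\le-\tfrac{\gamma}{2}\tilde\chi^2+\tfrac{\gamma}{2}\chi^2$ to get quadratic negativity, and for the fractional leakage $-\gamma_{i3}\hat\mu_i^{m_c}$ I would apply Lemma~4 with $\tilde\chi=\tilde\mu_i$ to produce $-\beta_1\tilde\mu_i^{1+m_c}$ plus a bounded remainder.

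The final step is to assemble the two dissipation families. The quadratic negative terms sum to $-\vartheta_1 V$ directly. For the fractional terms I would collect all the $-(\cdot)^{1+m_c}$ and $-(\cdot)^{2+2m_c}$ pieces, set $m=\tfrac{1+m_c}{2}\in(\tfrac12,1)$ so each matches a power of its quadratic Lyapunov component, and apply Lemma~3 to bound a sum of powers below by a power of the sum, yielding $-\vartheta_2 V^{m}$; all residuals ($d_{Fi}$, $\bar\tau_F$, the $0.2785\kappa_i$ and $\tau_\sigma\varepsilon_\sigma$ constants, the leakage remainders) aggregate into a finite $\vartheta_3$. Invoking Lemma~1 then delivers practical F-FnT boundedness with the stated settling-time and residual-set estimates, and boundedness of $\sigma_i,\delta_i$ gives boundedness of $\xi_i$ and hence global F-FnT boundedness of all closed-loop signals. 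I expect the main obstacle to be the switching region $0<w_i<1$: showing that the convex interpolation between the neural law and the robust law preserves the sign of $\dot V$ uniformly, since both the $\tanh$ bound (Lemma~2) and the Lemma~5 decomposition must be balanced simultaneously against the shared $\psi_i$ dissipation without either the $w_i$ or $(1-w_i)$ weighting opening an uncontrolled cross term.
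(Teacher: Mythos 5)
Your overall architecture --- a composite Lyapunov function over $(\sigma_i,\delta_i,\tilde\Omega_i,\tilde l_i,\tilde\mu_i,s_{ni},\tilde\mu_{ni})$, the Lemma~5 decomposition of the neural mismatch, Lemma~2 for the $\tanh$ and $\psi_i$ terms, Lemma~4 for the fractional leakage, Lemma~3 to assemble powers, and Lemma~1 to conclude --- is the same as the paper's. However, two of your steps would fail as written. First, your Lyapunov function is inconsistent with the composite learning law (14). You take the prediction-error terms as $\tfrac{1}{2}s_{ni}^2$ and $\tfrac{1}{2\gamma_{ni1}}\tilde\mu_{ni}^2$, but (14) updates $\hat\Omega_i,\hat l_i$ with the weighted signal $(\sigma_i+\gamma_{si}s_{ni})$. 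With your weights, the cross terms generated by the $\sigma_i$-dynamics and the $s_{ni}$-dynamics sum to $w_i(\sigma_i+s_{ni})\tilde\Omega_i^T\bigl(\hat H_i-\hat H_i'\hat l_i^T\rho_i\bigr)$, which does not cancel the $-w_i(\sigma_i+\gamma_{si}s_{ni})\tilde\Omega_i^T\bigl(\hat H_i-\hat H_i'\hat l_i^T\rho_i\bigr)$ contributed by (14) unless $\gamma_{si}=1$; the leftover $w_i(1-\gamma_{si})\,s_{ni}\,\tilde\Omega_i^T(\cdot)$ is an uncontrolled cross term that cannot simply be dominated, since it couples $s_{ni}$ with $\tilde\Omega_i$ through a factor involving $\hat l_i$, which is not yet known to be bounded at this stage. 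The paper's $V$ carries the weights $\gamma_{si}s_{ni}^2$ and $\gamma_{si}\gamma_{ni1}^{-1}\tilde\mu_{ni}^2$ precisely so that this cancellation is exact (and the same issue afflicts the $\tilde l_i$ channel).

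Second, your claim that the $\delta_i\bigl(\eta_{(i+1),c}-\alpha_i\bigr)$ terms ``cancel against the matching residuals from the $\sigma_i$ computation'' is wrong. The compensator (11) is constructed exactly so that the filtering residual is removed from the $\sigma_i$-dynamics; consequently it survives in $\delta_i\dot\delta_i$ as $G_i\delta_i\bigl(\eta_{(i+1),c}-\alpha_i\bigr)$ with nothing left to cancel it, and $\eta_{(i+1),c}-\alpha_i$ is not a priori bounded by a constant. The paper disposes of this term by Young's inequality combined with the rapid FnT command filter property that $\bigl|\eta_{i,c}-\alpha_{i-1}\bigr|={\rm O}\bigl(\varepsilon_{ci}^{\lambda_i m_{di}}\bigr)$ after a finite time $t_1$; this is what allows the residual to be absorbed into $\vartheta_3$, it is the reason the inequality $\dot V\le-\vartheta_1V-\vartheta_2V^{(1+m_c)/2}+\vartheta_3$ holds only for $t>t_1$, and it is why the settling time has the form $T_s=t_1+\max\{\cdots\}$. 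Your proposal never invokes this finite-time filter property, so the key differential inequality cannot be established and the settling-time estimate loses the $t_1$ offset. (A minor further slip: with $\sigma=\sigma_i^{1+m_c}$ in Lemma~2, the term $-n_i\sigma_i\psi_i(\sigma_i)$ supplies $-n_i|\sigma_i|^{1+m_c}$ dissipation up to a constant, not $-\sigma_i^{2+2m_c}$; the former is what matches $V^{(1+m_c)/2}$ in the assembly you describe.)
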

\begin{proof}
The Lyapunov function is constructed below
\begin{equation}
\begin{aligned}
V =& \frac{1}{2}\sum\limits_{i = 1}^n {\left( {\sigma _i^2 + \delta _i^2 + \gamma _{i1}^{ - 1}\tilde \mu _i^2 + {\gamma _{si}}\gamma _{ni1}^{ - 1}\tilde \mu _{ni}^2} \right)} \\
 &+ \frac{1}{2}\sum\limits_{i = 1}^n {\left( {\tilde \Omega _i^T\Gamma _{{\Omega _i}}^{ - 1}{{\tilde \Omega }_i} + tr\left\{ {\tilde l_i^T\Gamma _{{l_i}}^{ - 1}{{\tilde l}_i}} \right\} + {\gamma _{si}}s_{ni}^2} \right)}, 
\end{aligned}
\end{equation}
where ${\tilde \mu _i} = {\mu _i} - {\hat \mu _i},{\tilde \mu _{ni}} = {\mu _i} - {\hat \mu _{ni}}$.

By employing \emph{Lemma 2} and \emph{Lemma 5}, one gets
\begin{equation}\small
\begin{aligned}
&\left( {1 - {w_i}} \right){F_i}{\sigma _i} - \left( {1 - {w_i}} \right){\mu _i}{{\bar F}_i}{\sigma _i}\tanh \left( {\frac{{{{\bar F}_i}{\sigma _i}}}{{{\kappa _i}}}} \right) \le 0.2785{\kappa _i}{\mu _i},\\
&{\sigma _i}{d_{{F_i}}} \le \frac{{\sigma _i^2}}{2}\left\| {{\rho _i}\left( t \right)\hat \Omega _i^T{{\hat H'}_i}} \right\|_F^2 + \frac{{\left\| {{l_i}} \right\|_F^2}}{2} + \frac{{\sigma _i^2}}{2}{\left\| {{{\hat H'}_i}{{\hat l}_i}^T{\rho _i}\left( t \right)} \right\|^2}\\
& + \frac{{{{\left\| {{\Omega _i}} \right\|}^2}}}{2} + \sigma _i^2 + \frac{{\left| {{\Omega _i}} \right|_1^2}}{2} + \frac{{\bar \tau _{{F_i}}^2}}{2}.
\end{aligned}
\end{equation}

By adopting (18) and $tr\left\{ {{{\hat l}_i}^T{\rho _i}\left( t \right)\hat \Omega _i^T{{\hat H'}_i}} \right\} = \hat \Omega _i^T{\hat H'_i}{\hat l_i}^T{\rho _i}\left( t \right)$, the derivative of $V$ with respect to time is
\begin{equation}\small
\begin{aligned}
&\dot V \le  - \sum\limits_{i = 1}^n {\left[ {\left( {{k_i} + 1} \right)\sigma _i^2 + {n_i}\sigma _i^{{m_c} + 1} + {k_i}\delta _i^2 + {r_i}\delta _i^{{m_c} + 1} + {r_i}\delta _i^{{m_c}}{\sigma _i}} \right]} \\
& + \sum\limits_{i = 1}^n {\left[ {\frac{{{\gamma _{i2}}}}{{{\gamma _{i1}}}}{{\tilde \mu }_i}{{\hat \mu }_i} + \frac{{{\gamma _{i3}}}}{{{\gamma _{i1}}}}{{\tilde \mu }_i}\hat \mu _i^{{m_c}} + \frac{{{\gamma _{ni2}}{\gamma _{si}}}}{{{\gamma _{ni1}}}}{{\tilde \mu }_{ni}}{{\hat \mu }_{ni}} + \frac{{{\gamma _{ni3}}{\gamma _{si}}}}{{{\gamma _{ni1}}}}{{\tilde \mu }_{ni}}\hat \mu _{ni}^{{m_c}}} \right]} \\
& + \sum\limits_{i = 1}^n {\left[ {{\gamma _i}\tilde \Omega _i^T{{\hat \Omega }_i} + {\gamma _i}tr\left\{ {\tilde l_i^T{{\hat l}_i}} \right\} - {\gamma _{si}}\left( {{\upsilon _{i1}} - 1} \right)s_{ni}^2 - {\gamma _{si}}{\upsilon _{i2}}s_{ni}^{{m_c} + 1}} \right]} \\
& + \sum\limits_{i = 1}^n {\left( {{c_\sigma }{\kappa _i}{\mu _i} + {c_\sigma }{\gamma _{si}}{\kappa _{ni}}{\mu _{ni}} + \frac{{{n_i}{\tau _{\sigma i}}{\varepsilon _{\sigma i}}}}{{\sqrt {\tau _{\sigma i}^2 + \varepsilon _{\sigma i}^2} }}} \right)}  + \sum\limits_{i = 1}^{n - 1} {{G_i}{\delta _i}\left( {{\eta _{\left( {i + 1} \right),c}} - {\alpha _i}} \right)} \\
& + \sum\limits_{i = 1}^n {\frac{{1 + {\gamma _{si}}}}{2}} \left( {\left\| {{l_i}} \right\|_F^2 + {{\left\| {{\Omega _i}} \right\|}^2} + \left| {{\Omega _i}} \right|_1^2 + \bar \tau _{{F_i}}^2} \right),
\end{aligned}
\end{equation}
where ${c_\sigma }=0.2785$.

According to \cite{differentiator1}, there exist ${\lambda _i} > 0\left( {{\lambda _i}{m_{di}} > 2} \right)$ such that $\left| {{\eta _{i,c}} - {\alpha _{i - 1}}} \right| = {\rm O}\left( {\varepsilon _{ci}^{{\lambda _i}{m_{di}}}} \right)$ in finite time $t_1$, where ${\rm O}\left( {\varepsilon _{ci}^{{\lambda _i}{m_{di}}}} \right)$ represents that the estimation errors between ${\eta _{i,c}}$ and ${\alpha _{i - 1}}$ are ${\varepsilon _{ci}^{{\lambda _i}{m_{di}}}}$ order.
 
By applying Young's inequality and \emph{Lemma 4}, we obtain
\begin{equation}\small
\begin{aligned}
&{G_i}{\delta _i}\left( {{\eta _{\left( {i + 1} \right),c}} - {\alpha _i}} \right) \le \frac{1}{2}\delta _i^2 + \frac{1}{2}{{\bar G}^2}{\rm O}\left( {\varepsilon _{ci}^{2{\lambda _i}{m_{di}}}} \right),\\
&{{\tilde \mu }_i}\hat \mu _i^{{m_c}} \le  - {\beta _1}\tilde \mu _i^{{m_c} + 1} + {\beta _2}\mu _i^{{m_c} + 1},\\
&{r_i}\delta _i^{{m_c}}{\sigma _i} \le \frac{{{r_i}}}{{1 + {m_c}}}\sigma _i^{{m_c} + 1} + \frac{{{r_i}{m_c}}}{{1 + {m_c}}}\delta _i^{{m_c} + 1},\\
&{\gamma _i}\tilde \Omega _i^T{{\hat \Omega }_i} \le  - \frac{{{\gamma _i}}}{2}{\left\| {{{\tilde \Omega }_i}} \right\|^2} + \frac{{{\gamma _i}}}{2}{\left\| {{\Omega _i}} \right\|^2},\\
&{\gamma _i}tr\left\{ {\tilde l_i^T{{\hat l}_i}} \right\} \le  - \frac{{{\gamma _i}}}{2}\left\| {{{\tilde l}_i}} \right\|_F^2 + \frac{{{\gamma _i}}}{2}\left\| {{l_i}} \right\|_F^2,\\
& - {\left\| {{{\tilde \Omega }_i}} \right\|^2} \le  - {\left( {{{\left\| {{{\tilde \Omega }_i}} \right\|}^2}} \right)^{\frac{{{m_c} + 1}}{2}}} + \frac{{1 - {m_c}}}{2}{\left( {\frac{{{m_c} + 1}}{2}} \right)^{\frac{{1 + {m_c}}}{{1 - {m_c}}}}},\\
 &- \left\| {{{\tilde l}_i}} \right\|_F^2 \le  - {\left( {\left\| {{{\tilde l}_i}} \right\|_F^2} \right)^{\frac{{{m_c} + 1}}{2}}} + \frac{{1 - {m_c}}}{2}{\left( {\frac{{{m_c} + 1}}{2}} \right)^{\frac{{1 + {m_c}}}{{1 - {m_c}}}}}.
\end{aligned}
\end{equation}

Substituting (20) into (19) and utilizing \emph{Lemma 3}, when $t>t_1$, we gets
\begin{equation}
\dot V(\eta ) \le  - {\vartheta _1}V(\eta ) - {\vartheta _2}{V^{\frac{{1 + {m_c}}}{2}}}(\eta ){\rm{ + }}{\vartheta _3},
\end{equation}
where 
\begin{equation}\small
\begin{aligned}
{\vartheta _1} = \min &\left\{ {\left( {2{k_j} - 1} \right),2{k_n},{\gamma _{i2}},{\gamma _{ni2}},2\left( {{\upsilon _{i1}} - 1} \right),}\right.\\
&\phantom{=\;\;}\left.{
0.5{\gamma _i}\lambda _{\max }^{ - 1}\left( {\Gamma _{{\Omega _i}}^{ - 1}} \right),0.5{\gamma _i}\lambda _{\max }^{ - 1}\left( {\Gamma _{{l_i}}^{ - 1}} \right)} \right\},\\
{\vartheta _2} = \min &\left\{ {\left( {{n_i} - \frac{{{r_i}}}{{1 + {m_c}}}} \right),\frac{{{r_i}}}{{1 + {m_c}}},\frac{{{\gamma _{i3}}{\beta _1}}}{{\gamma _{i1}^{{\textstyle{{1 - {m_c}} \over 2}}}}},\frac{{{\gamma _{ni3}}\gamma _{si}^{{\textstyle{{1 - {m_c}} \over 2}}}{\beta _1}}}{{\gamma _{ni1}^{{\textstyle{{1 - {m_c}} \over 2}}}}},}\right.\\
&\phantom{=\;\;}\left.{
\gamma _{si}^{{\textstyle{{1 - {m_c}} \over 2}}}{\upsilon _{i2}},\frac{{{\gamma _i}}}{4}\lambda _{\max }^{ - {\textstyle{{1 + {m_c}} \over 2}}}\left( {\Gamma _{{\Omega _i}}^{ - 1}} \right),\frac{{{\gamma _i}}}{4}\lambda _{\max }^{ - {\textstyle{{1 + {m_c}} \over 2}}}\left( {\Gamma _{{l_i}}^{ - 1}} \right)} \right\}\cdot{\vartheta _c},
\end{aligned}
\end{equation}
and ${\vartheta _3} = {c_\sigma }\sum\limits_{i = 1}^n {\left( {{\kappa _i}{\mu _i} + {\gamma _{si}}{\kappa _{ni}}{\mu _{ni}}} \right)}  + \sum\limits_{i = 1}^n {\frac{{{n_i}{\tau _{\sigma i}}{\varepsilon _{\sigma i}}}}{{\sqrt {\tau _{\sigma i}^2 + \varepsilon _{\sigma i}^2} }}}  + \sum\limits_{i = 1}^n {\frac{{1 + {\gamma _{si}}}}{2}} \left( {\left\| {{l_i}} \right\|_F^2 + {{\left\| {{\Omega _i}} \right\|}^2} + \left| {{\Omega _i}} \right|_1^2 + \bar \tau _{{F_i}}^2} \right) + \sum\limits_{i = 1}^n {\left( {\frac{{{\gamma _{i2}}}}{{2{\gamma _{i1}}}} + \frac{{{\gamma _{ni2}}{\gamma _{si}}}}{{2{\gamma _{ni1}}}}} \right)\mu _i^2}  + {\beta _2}\sum\limits_{i = 1}^n {\left( {\frac{{{\gamma _{i3}}}}{{{\gamma _{i1}}}} + \frac{{{\gamma _{ni3}}{\gamma _{si}}}}{{{\gamma _{ni1}}}}} \right)\mu _i^{{m_c} + 1}}  + \frac{{{\gamma _i}}}{2}\sum\limits_{i = 1}^n {\left( {\left\| {{l_i}} \right\|_F^2 + {{\left\| {{\Omega _i}} \right\|}^2}} \right)}  + \frac{1}{2}{\bar G^2}\sum\limits_{i = 1}^{n - 1} {{\rm O}\left( {\varepsilon _{ci}^{2{\lambda _i}{m_{di}}}} \right)}  + \frac{{1 - {m_c}}}{4}{\left( {\frac{{{m_c} + 1}}{2}} \right)^{\frac{{1 + {m_c}}}{{1 - {m_c}}}}}\sum\limits_{i = 1}^n {{\gamma _i}}$. $j = 1, \ldots ,n - 1,i = 1, \ldots ,n$, ${\vartheta _c}={2^{{\textstyle{{1 + {m_c}} \over 2}}}}$.

In terms of \emph{Lemma 1}, ${\sigma _i},{\delta _i}$ tend to the following residual sets
\begin{equation}
\begin{aligned}
&\left| {{\sigma _i}} \right| \le \min \left\{ {\sqrt {\frac{{2{\vartheta _3}}}{{\left( {1 - \nu } \right){\vartheta _1}}}} ,\sqrt {2{{\left( {\frac{{{\vartheta _3}}}{{\left( {1 - \nu } \right){\vartheta _2}}}} \right)}^{\frac{2}{{1 + {m_c}}}}}} } \right\},\\
&\left| {{\delta _i}} \right| \le \min \left\{ {\sqrt {\frac{{2{\vartheta _3}}}{{\left( {1 - \nu } \right){\vartheta _1}}}} ,\sqrt {2{{\left( {\frac{{{\vartheta _3}}}{{\left( {1 - \nu } \right){\vartheta _2}}}} \right)}^{\frac{2}{{1 + {m_c}}}}}} } \right\}
\end{aligned}
\end{equation}
in finite time $T_s$, which is expressed as follows
\begin{equation}
\begin{aligned}
{T_s} = {t_1} + \max &\left\{ {\frac{2}{{\nu {\vartheta _1}\left( {1 - {m_c}} \right)}}\ln \frac{{\nu {\vartheta _1}{V^{{\textstyle{{1 - {m_c}} \over 2}}}}\left( 0 \right) + {\vartheta _2}}}{{{\vartheta _2}}},}\right.\\
&\phantom{=\;\;}\left.{\frac{2}{{{\vartheta _1}\left( {1 - {m_c}} \right)}}\ln \frac{{{\vartheta _1}{V^{{\textstyle{{1 - {m_c}} \over 2}}}}\left( 0 \right) + \nu {\vartheta _2}}}{{\nu {\vartheta _2}}}} \right\}.
\end{aligned}
\end{equation}

When $t \ge {T_s}$, the tracking errors can reach
\begin{equation}
\begin{aligned}
\left| {{\xi _i}} \right| \le \left| {{\sigma _i}} \right| + \left| {{\delta _i}} \right| \le \min \left\{ {2\sqrt {\frac{{2{\vartheta _3}}}{{\left( {1 - \nu } \right){\vartheta _1}}}} ,2\sqrt {2{{\left( {\frac{{{\vartheta _3}}}{{\left( {1 - \nu } \right){\vartheta _2}}}} \right)}^{\frac{2}{{1 + {m_c}}}}}} } \right\}.
\end{aligned}
\end{equation}

The proof is completed.
\end{proof}
\begin{figure}[htbp]
    \centering
	\includegraphics[width=0.4\textwidth]{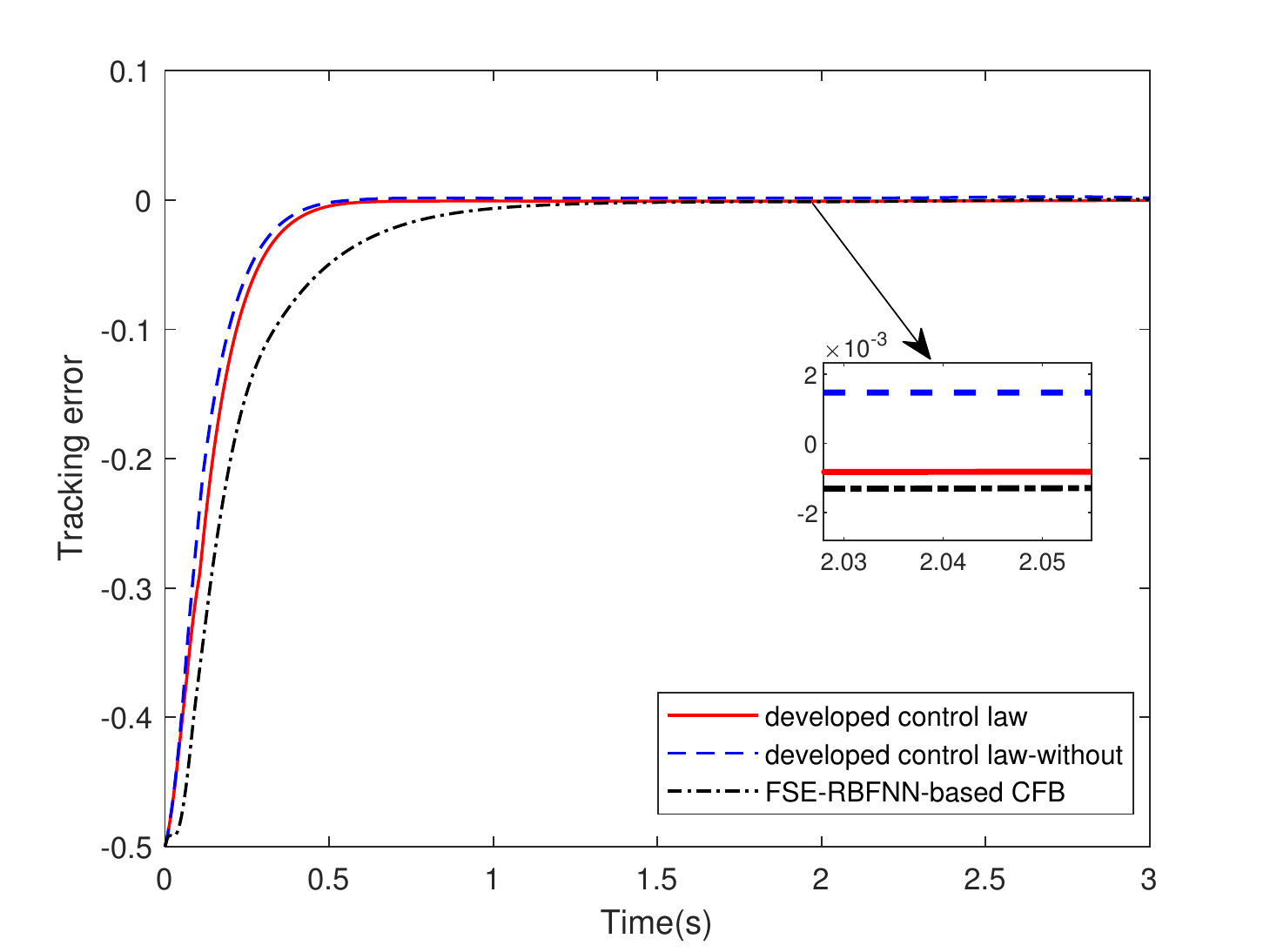}
	\caption{Tracking errors}
\end{figure}

\section{Simulation Results}
The following simulation case \cite{FSE6} is adopted in this section to manifest the availability of the developed control scheme:
\begin{equation}
\begin{aligned}
&{{\dot \eta }_1} = {\eta _2},\\
&{{\dot \eta }_2} = 2.5{\eta _2}\left| {\cos \left( t \right)} \right| - \frac{{0.5MgL}}{J}\sin \left( {{\eta _1}} \right) + \frac{u}{J},
\end{aligned}
\end{equation}
where ${\eta _1}\left( 0 \right) = 0.5, M=2, g=9.8, L=1, J =0.5$, $y_d= \sin (t)$.
\begin{figure}[htbp]
	\centering
	\subfigure[Approximation errors]{
	\includegraphics[width=0.22\textwidth]{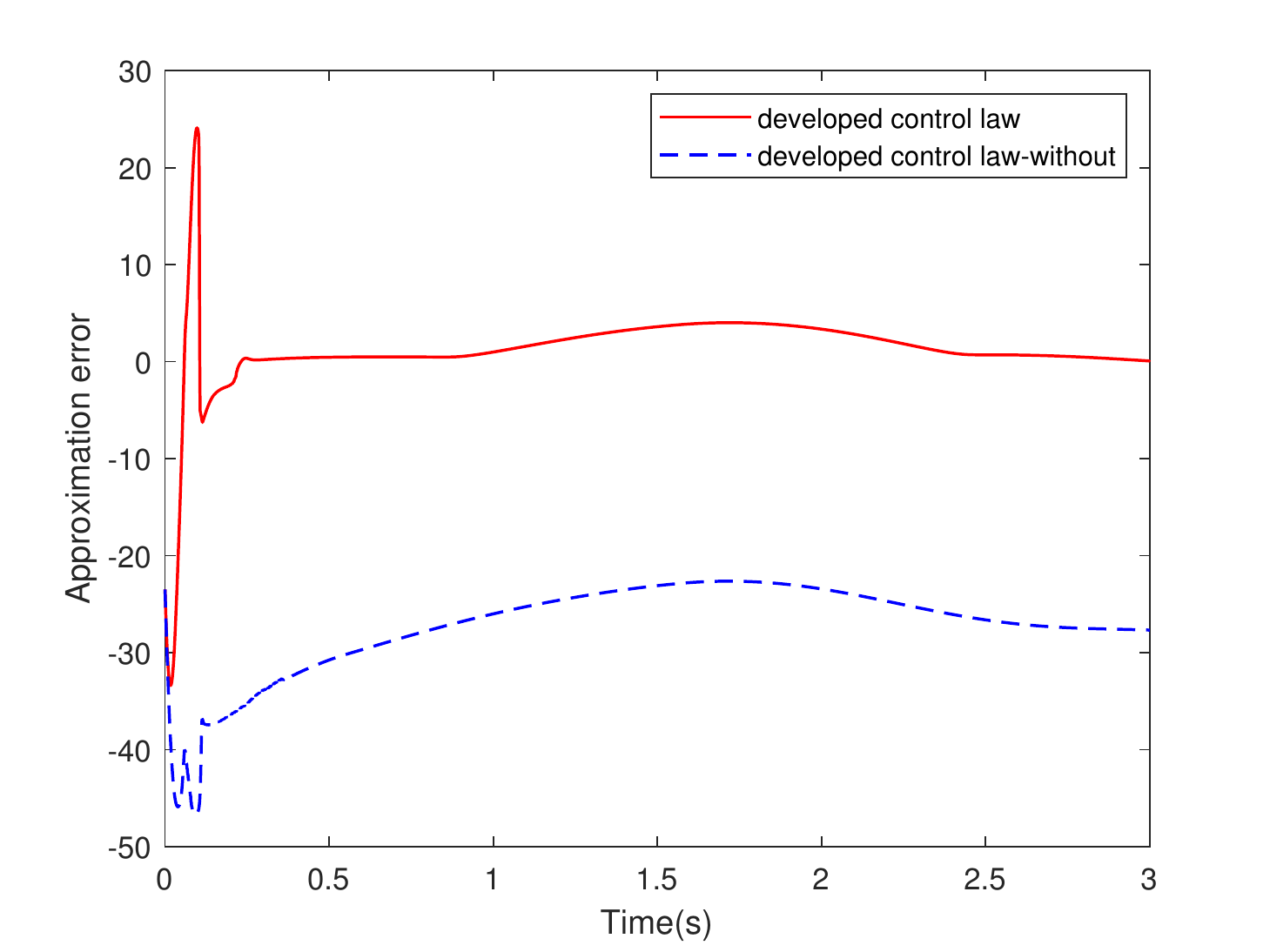}}
	\subfigure[Switching signal]{
	\includegraphics[width=0.22\textwidth]{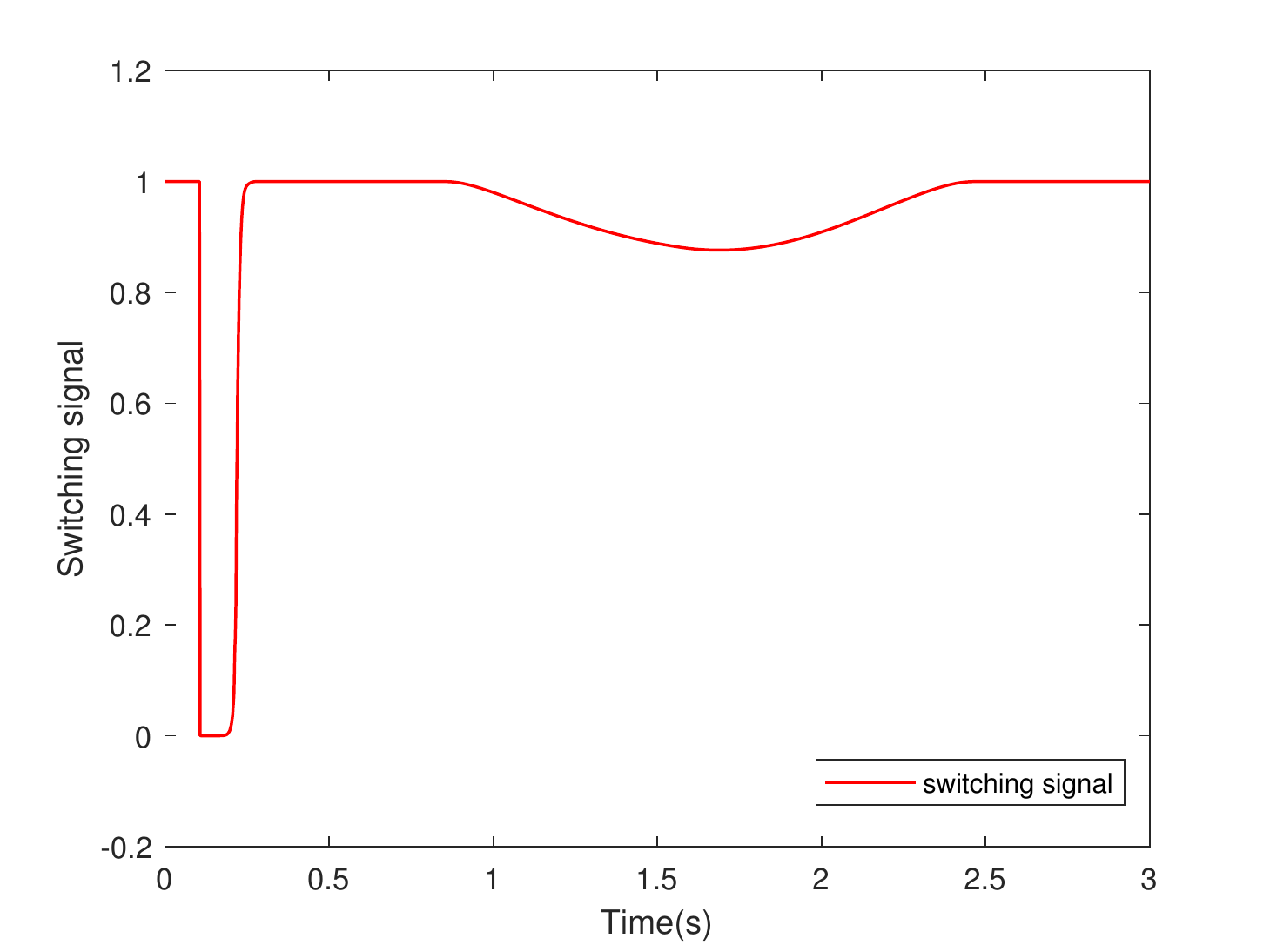}}
	\caption{Approximation errors and Switching signal}
\end{figure}

The parameters setting of the developed control law is ${k_1} = 8,{k_2} = 5,{r_1} = {r_2} = 1,{n_1} = {n_2} = 0.5,{m_c} = 0.6,m = 7,{\Gamma _{{\Omega _2}}} = 10I,{\Gamma _{{l_2}}} = 15I,{\gamma _{s2}} = 5,{\gamma _{21}} = {\gamma _{n21}} = 15,{\gamma _{22}} = {\gamma _{n22}} = 0.001$. The RBFNN embodies 216 nodes with centers distributed evenly in $\left[ { - 1.5,1.5} \right] \times \left[ { - 1.5,1.5} \right] \times \left[ { - 3,3} \right]$ and widths are chosen as 2. The developed control law without composite updating weights and the FSE-RBFNN-based approximator \cite{FSE} combined with the command-filtered backstepping protocol \cite{ACFB} are considered as comparisons, marked as developed control law-without and FSE-RBFNN-based CFB, respectively.

The simulation results depicted in Fig. 1 and Fig. 2 (a) illuminate that the precision of intelligent approximation is heightened by utilizing the designed composite weight updates, which results in superior tracking performance. Additionally, the switching signal in Fig. 2 (b) demonstrates that the developed robust controller can bring the input signals of RBFNN back to the approximation valid region, which guarantees the global stabilization of the closed-loop system.
 
\section{Conclusion}
In this brief, a globally FSE-based neural F-FnT control scheme with composite learning law has been established to address the tracking control issue for a family of nonlinear strict-feedback systems with PTV disturbances. The FSE-based RBFNN with new compound updating law is developed to reinforce the precision of approximation. By combining a novel switching mechanism with a modified F-FnT backstepping protocol, all signals in the closed-loop system are globally F-FnT bounded, and the tracking error approaches a sufficiently small neighborhood of zero in finite time. The availability of the obtained results is exemplified by simulation research.

\balance
\bibliographystyle{Bibliography/IEEEtranTIE}
\bibliography{Bibliography/IEEEabrv,Bibliography/myRef}\ 

\end{document}